\newcommand{\var}[1]{\text{\lstinline+#1+}}
\newtheorem{theorem}{Theorem}
\newtheorem{property}{Property}
\def\cA{\ensuremath{{\mathcal A}}}
         \def\cP{\ensuremath{{\mathcal P}}}
\newcommand{\Owns}{\ensuremath{\mathit{Owns}}}
\newcommand{\OwnsC}{\ensuremath{\mathit{Owns}_C}}
\newcommand{\OwnsA}{\ensuremath{\mathit{Owns}_A}}
\newcommand{\plist}{\ensuremath{\mathit{plist}}}
\newcommand{\Dinfo}{\ensuremath{\mathit{Dinfo}}}
\newcommand{\escrow}{\ensuremath{\mathit{escrow}}}
\newcommand{\transfer}{\ensuremath{\mathit{transfer}}}
\newcommand{\startX}{\ensuremath{\mathit{startDeal}}}
\newcommand{\startD}{\ensuremath{\mathit{startDeal}}}
\newcommand{\startDeal}{\startD}
\newcommand{\commit}{\ensuremath{\mathit{commit}}}
\newcommand{\abort}{\ensuremath{\mathit{abort}}}
\newcommand{\figlabel}[1]{\label{figure:#1}}
\newcommand{\nakedfigref}[1]{\ref{figure:#1}}
\newcommand{\figref}[1]{Figure~\nakedfigref{#1}}
\newcommand{\Figref}[1]{Figure~\nakedfigref{#1}}
\newcommand{\tablabel}[1]{\label{table:#1}}
\newcommand{\nakedtabref}[1]{\ref{table:#1}}
\newcommand{\tabref}[1]{Table~\nakedtabref{#1}}
\newcommand{\seclabel}[1]{\label{sec:#1}}
\newcommand{\nakedsecref}[1]{\ref{sec:#1}}
\newcommand{\secref}[1]{Section~\nakedsecref{#1}}
\newcommand{\nakedlineref}[1]{\ref{line:#1}}
\newcommand{\lineref}[1]{Line~\nakedlineref{#1}}
\newcommand{\proplabel}[1]{\label{prop:#1}}
\begin{document}
\title{Cross-chain Deals and Adversarial Commerce}
\numberofauthors{3}
 \author{
\alignauthor
Maurice Herlihy\\
\affaddr{Brown University}\\
\email{mph@cs.brown.edu}
\alignauthor
Barbara Liskov\\
\affaddr{MIT CSAIL}\\
\email{liskov@csail.mit.edu}
\alignauthor
Liuba Shrira \\ 
\affaddr{Brandeis University}\\
\email{liuba@brandeis.edu}
}

\maketitle

\begin{abstract}
Modern distributed data management systems face a new challenge:
how can autonomous, mutually-distrusting parties cooperate
safely and effectively?
Addressing this challenge brings up questions familiar from
classical distributed systems:
how to combine multiple steps into a single atomic action,
how to recover from failures, and
how to synchronize concurrent access to data.
Nevertheless, each of these issues requires rethinking when
participants are autonomous and potentially adversarial.

We propose the notion of a \emph{cross-chain deal},
a new way to structure complex distributed computations that
manage assets in an adversarial setting.
Deals are inspired by classical atomic transactions,
but are necessarily different, in important ways,
to accommodate the decentralized and untrusting nature of the exchange.
We describe novel safety and liveness properties,
along with two alternative protocols for implementing
cross-chain deals in a system of independent blockchain ledgers.
One protocol, based on synchronous communication,
is fully decentralized,
while the other,
based on semi-synchronous communication,
requires a globally shared ledger.

\end{abstract}

\section{Introduction}
The emerging domain of electronic commerce spanning multiple
blockchains is a kind of fun-house mirror of classical distributed computing:
familiar features are recognizable, but distorted.
For example,
classical atomic transactions are often described in terms of the
well-known ACID properties~\cite{HaerderR1983}:
atomicity, consistency, isolation, and durability.
We will see that cross-chain commerce requires structures
superficially similar to, but fundamentally different from,
classical atomic transactions.
In particular,
the classical notions of correctness for atomic transactions
must be rethought.

Classical \emph{atomicity} means that a transaction's effects take
place everywhere or nowhere.
This notion of atomicity cannot be guaranteed when parties are
potentially malicious:
the best one can do is to ensure that honest parties cannot be cheated.
Moreover, classical transactions often prioritize safety over
liveness, allowing, for example, commit protocols that can
block~\cite{twophasecommit}.
For cross-chain commerce, however,
liveness guarantees have been relied upon to ensure that one party cannot trick
another into locking up assets forever, or even for a long time.

Classical \emph{isolation} guarantees that concurrent transactions
cannot interfere in destructive ways.
Isolation is typically provided via properties such as
serializability or snapshot consistency.
These properties are poorly suited to cross-chain commerce,
where mutually-untrusting parties may require multiple cautious interactions to
set up and execute a deal.
Instead,
non-interference takes the form of protection against
\emph{double-spending}: ensuring that one party does not concurrently
sell the same asset to multiple counterparties.

Here we propose the notion of a \emph{cross-chain deal}, a new
computational abstraction for structuring commercial interactions
that mirror standard (non-blockchain) commercial practices.
Cross-chain deals are inspired by classical atomic transactions and
modern cross-chain swaps,
but differ, in essential ways, from both.

Cross-chain deals are not atomic transactions.
They solve different problems:
transactions perform complex distributed state changes,
while deals, by contrast, simply exchange assets among parties.
While a transaction's effects must be ``all-or-nothing'' to preserve
global invariants,
each autonomous party in a deal can decide independently whether it
finds an outcome satisfactory for itself.
Transactions and deals make different failure assumptions: 
transactions usually assume parties can fail only by crashing,
while deals necessarily assume parties may deviate arbitrarily from
the common protocol.

Cross-chain deals are also not cross-chain swaps.
As illustrated below,
cross-chain swaps lack the expressive power to support all but the
simplest kinds of standard commercial practices.

\sloppy
Adversarial commerce, defined as economic exchange
among mutually untrusted autonomous parties, is here to stay.
Moreover, a system architecture composed of autonomous untrusted
parties that communicate via shared tamper-proof data stores is the
most natural way to organize such a system.
Although we will propose protocols based on today's blockchains and
smart contracts, none of our principal results depends on specific
blockchain technology, or even blockchains as such.
Instead, we focus on computational abstractions central to any
systematic approach to adversarial commerce,
no matter what technology underlies the shared data stores.

This paper makes the following contributions.
\begin{itemize}
\item
We propose the \emph{cross-chain deal} as a new computational
abstraction for structuring complex distributed exchanges in an
adversarial setting.
Deals require new notions of correctness and new distributed protocols.

\item
We propose new safety and liveness properties to replace the classical
notions of transactional atomicity.

\item
We describe two protocols for implementing cross-chain deals:
a fully-decentralized \emph{timelock} protocol that assumes a synchronous
communication model,
and a more centralized \emph{certified blockchain} (CBC) protocol that does not.
We sketch a proof that any protocol that tolerates periods of
asynchrony must rely on a centralized blockchain (or similar ledger structure).

\item
We sketch implementations of the two protocols and use them to analyze
the costs of the protocols.  Given the immature state of today's
blockchain technology, we focus on inherent, abiding,
platform-independent trade-offs and costs rather than explicit
performance measurements. Our intent is to illustrate the
costs associated with the
qualitatively different ways in which cross-chain deals can be
supported. 
\end{itemize}

Specifying correctness for systems in which parties can deviate
arbitrarily from the common protocol requires care.
One cannot even assume parties will be rational,
because they may have unknown objective functions
(such as a foreign power willing to pay to disrupt an economy).
Furthermore,
some familiar classical properties such as the ``all-or-nothing''
property of atomic transactions cannot be enforced in an
adversarial setting.
In much of the literature on cross-chains swaps and transactions,
correctness is treated informally, or in an ambiguous way.
Without a realizable (and realistic) notion of correctness,
it is impossible to reason about the correctness of blockchain protocols,
smart contract code,
or any other subsystem supporting adversarial commerce.

\subsection{System Models}
The cross-chain deal abstraction is flexible enough to encompass
a variety of system models and assumptions.
At one extreme,
proof-of-work protocols such as Bitcoin~\cite{bitcoin} and Ethereum~\cite{ethereum}
require a \emph{synchronous} model,
where there is a known upper bound on the propagation time for one party's change to the blockchain state to be noticed by the other parties.
At the other extreme,
Byzantine fault-tolerant consensus protocols such as Algorand~\cite{GiladHMVZ2017}, Libra~\cite{libra}, and Hot-Stuff~\cite{AbrahamGM2018} operate in a more demanding \emph{semi-synchronous} model,
where there is initially no bound on propagation time,
but the system eventually reaches a \emph{global stabilization time}
(GST) after which the system becomes synchronous.
(In practice, the synchronous periods need only last ``long enough'' to stabilize the protocol.)

Each model has advantages and disadvantages,
and each imposes its own constraints on protocols and applications.
Synchronous protocols may require countermeasures against denial-of-service attacks,
such as choosing timeouts carefully, or establishing ``watchtowers''~\cite{watchtower}.
Semi-synchronous protocols avoid explicit timeouts,
but they necessarily require a greater degree of centralization than
synchronous protocols.
(Indeed, we sketch a proof that any semi-synchronous
protocol there must be a single ledger consulted by all parties.) 
Choosing a communication model involves complex application-specific trade-offs that
lie beyond the scope of this paper.

To illustrate the broad applicability of cross-chain deals,
we introduce two protocols for implementing cross-chain deals:
a fully-decentralized \emph{timelock} protocol that assumes a synchronous
communication model,
and a more centralized \emph{certified blockchain} protocol that does not.
While both protocols implement cross-chain deals,
we will see that they make different guarantees about what happens
when things go wrong.

\subsection{Other Approaches}
Today, the most common way to trade electronic assets is to use a
trusted third party,
sometimes called a \emph{crypto exchange},
such as Coinbase or Binance.
Such exchanges are typically unregulated, and provide no guarantees of any kind.
Crypto exchanges have been known to lose substantial sums to hackers~\cite{binance},
and even to vanish along with their customers' deposits~\cite{mtgox}.

The need for safety in adversarial commerce has inspired an outburst
of interest in
\emph{cross-chain swaps}~\cite{bitcoinwiki,bip199,decred,arwen2019,Herlihy2018,tiersnolan,barterdex,ZakharyAE2019,Catalyst},
supporting atomic, one-shot, unconditional direct transfers between parties.
Existing cross-chain swap proposals have very limited expressive power:
they cannot support indirect transfers, as mediated by a broker,
nor can they support conditional exchanges such as auctions,
where the seller exchanges assets only with the highest bidder.
In general, cross-chain swaps do not support the kinds of complex ``business logic'' 
required by many kinds of modern financial deals.

\subsection{Example}
Alice is a ticket broker.  She buys tickets at wholesale prices from
event organizers and resells them at retail prices to consumers,
collecting a modest commission.  Alice lives in the future, where
tickets are managed on a \emph{ticket blockchain}, a tamper-proof
replicated ledger that tracks ticket ownership.  Similarly, the coins
paid for those tickets live on a distinct \emph{coin blockchain}.
Both blockchains support \emph{contracts}, simple programs that
control when and how ownership of tickets and coins is transferred.

One day Bob, a theater owner,
decides to sell two coveted tickets to a hit play for 100 coins.
Alice knows that Carol would be willing to pay 101 coins for those tickets,
so Alice moves to broker a deal between Bob and Carol.

Alice's task is to devise a distributed protocol, executed by Alice,
Bob, and Carol, communicating through contracts running on various
blockchains, to execute a cross-chain deal that transfers the tickets
from Bob to Carol, and the coins from Carol to Bob, minus Alice's
commission.  If all goes as planned, all transfers take place, and if
anything goes wrong (someone crashes or tries to cheat), no honest
party should end up worse off.  For example, Alice should not
end up holding tickets she can't sell or coins that she must refund.
\newpage
\subsection{Cross-Chain Deals vs. Cross-chain Swaps}
Existing proposals for cross-chain swaps set up a collection of
unconditional transfers from one party to another.
Each party checks that it is satisfied with its own transfers in and out,
and then when all approve, the transfers take place.
In most of these proposals,
transfers are triggered by producing the preimage to a hashed
value within a certain time (a \emph{hashed timelock}),
but there are proposals to use blockchain-based two-phase commit protocols~\cite{ZakharyAE2019}.

Because each party sets up an unconditional transfer,
cross-chain swaps are incapable of expressing many kinds of standard financial transactions.
For example:
\begin{itemize}
\item
  Even Alice, Bob, and Carol's simple ticket brokering deal
  cannot be expressed as a cross-chain swap.
  Alice pays Bob with coins she receives from Carol,
  and sells Carol a ticket she receives from Bob.
  Alice cannot commit to either transfer at the start of the protocol
  because she does not own those assets.
Instead, Alice acts as a broker,
contributing value by acting as middleman,
relaying assets between Bob and Carol
(or more generally,
between a pool of wholesale sellers and a pool of retail buyers).

\item
  As part of the ticket deal,
  Carol wants to convert alt-coins into coins,
  which she will use to pay Alice.
  If the deal does not go through,
  she does not want to be left holding extra coins.
  Carol cannot statically commit to transfer those coins,
  because she does not own them at the start of the protocol.

\item
  Suppose instead Alice wants to auction an asset.
  Bob and Carol escrow their bids.
  Alice's contract compares those bids,
  and if the higher bid exceeds her reserve price,
  it releases the lower bid to the loser,
  and exchanges the asset for the winner's higher bid.
  Auctions cannot be expressed as swaps because the auction's outcome
  (reserve price exceeded, identity of winner)
  cannot be determined until all bids have been submitted.

\item
  More generally,
  cross-chain swaps cannot support any kind of \emph{arbitrage},
  where Alice acquires an asset at one price,
  and immediately resells it at a higher price.
\end{itemize}

We are not aware of any prior cross-chain swap protocol that can support
these (conventional) kinds of structured deals.
While cross-chain swaps can be considered a special case of cross-chain deals,
cross-chain deals are substantially more flexible and powerful.

\section{Cross-Chain Deals}
Here we describe cross-chain deals,
what it means to execute them,
and what it means for them to be correct.

\subsection{Specifying the Deal}
\begin{table}
\centering
\caption{Alice, Bob, and Carol's deal. Rows represent outgoing transfers, and columns incoming transfers}
\tablabel{matrix}

\begin{tabular}{|l|c|c|c|}
  \hline
        &Alice     &Bob        &Carol \\
  \hline
  Alice &          &100 coins  &tickets \\
  \hline
  Bob   &tickets   &           &        \\
  \hline
  Carol &101 coins &           &        \\
  \hline
\end{tabular}
\end{table}

Each payoff (set of final transfers) for a deal can be expressed as a matrix (or table),
where each row and column is labeled with a party,
and the entry at row $i$ and column $j$ shows the assets
to be transferred from party $i$ to party $j$.
A party's column states what it expects to acquire from the deal
(its \emph{incoming} assets),
and its row states what it expects to relinquish
(its \emph{outgoing} assets).
A party enters a deal if the proposed transfers leave it better off,
and it agrees to commit (complete) the deal if it deems the
actual payoff to be acceptable.

In our running example,
the payoff is given by the $3 \times 3$ matrix in \tabref{matrix}.
Carol expects to transfer 101 coins to Alice
in return for tickets transferred from Alice.
Similarly, Bob expects to transfer tickets to Alice in return for 100
coins from Alice.
Although the table refers only to ``tickets,''
the specific (non-fungible) tickets to be provided would be part of
the deal specification,
while the specific (fungible) coins would likely be omitted.

A deal where Alice auctions an asset to Bob and Carol
would require two matrices, one for each successful outcome:
(1) if Bob outbids Carol, Alice transfers her asset to Bob,
Bob transfers his bid to Alice, and Carol transfers her bid back to herself, and
(2) if Carol outbids Bob, the transfers are symmetric.
(A realistic on-chain auction would also include fees and deposits to
penalize malicious behavior by bidders.)

\subsection{Correctness}
Parties to a deal carry out a \emph{protocol}
to complete the deal's transfers.
In an environment where we cannot force the parties to follow 
a protocol,
it is impossible to guarantee that all transfers take
place as promised by the deal specification.
Which kind of partial transfers should be deemed acceptable?

Instead of distinguishing between faulty and non-faulty parties,
as in classical models,
we distinguish only between \emph{compliant} parties
who follow the protocol,
and \emph{deviating} parties who do not.
Many kinds of fault-tolerant distributed protocols require that some
fraction of the parties be compliant.
For example,
proof-of-work consensus~\cite{bitcoin} requires a compliant majority,
while most Byzantine fault-tolerant (BFT) consensus protocols require
more than two-thirds of the participants to be compliant.
For cross-chain deals, however,
it seems prudent to make no assumptions about the number of deviating parties.

This classification of parties as either compliant or deviating is
partly inspired by the classification in the \emph{BAR}
model~\cite{AiyerACDMP2005}, which identifies parties as rational,
altruistic, or Byzantine.
In adversarial commerce, compliant parties are rational parties that
choose to follow the protocol; all others are deviating, whether rational or not
(c.f. \cite{FordB2018}).
There are no altruistic parties.
Critically, our classification differs from that of BAR (and
other standard models of Byzantine behavior) by not limiting the
number of Byzantine parties.

The most fundamental safety property is (informally) that compliant
parties should end up ``no worse off,''
even when other parties deviate arbitrarily from the protocol.
A party's \emph{payoff} for a protocol execution is the sets of
incoming and outgoing assets actually transferred.
Some payoffs are considered \emph{acceptable}, the rest not.
Some acceptable payoffs are preferable to others,
but any acceptable payoff leaves that party ``no worse off.''

Every party considers the following payoffs acceptable:
\textsc{All}, where all agreed transfers take place,
and \textsc{Nothing}, where no transfers take place.
In addition,
we allow a party to consider other payoffs acceptable.
For example, a party that expects three incoming transfers
and three outgoing transfers may be willing to accept a payoff where
it receives only two incoming transfers in return for
only two outgoing transfers.
Of course, any such choice is application-dependent.

We also assume that if a payoff is acceptable to a party
then so is any payoff where that party transfers in strictly
more incoming assets (something for nothing)
or transfers out strictly fewer outgoing assets (discount pricing).
For example,
a payoff where a party transfers no outgoing assets but receives some
incoming assets is an acceptable modification to the \textsc{Nothing}
payoff. 
Such outcomes, while unlikely in practice, cannot be excluded.

A cross-chain deal protocol satisfies \emph{safety} if:
\begin{property}
  \proplabel{safety}
For every protocol execution,
every compliant party ends up with an acceptable payoff.
\end{property} 
This notion of safety replaces the classical \emph{all-or-nothing}
property of atomic transactions,
which, as noted, cannot be implemented in the presence of deviating parties.

Cross-chain task protocols typically rely on some form of
\emph{escrow} to ensure the good faith of
participating parties.  The following \emph{weak liveness} property
ensures that conforming parties' assets cannot be locked up forever.

\begin{property}
No asset belonging to a compliant party is escrowed forever.
\end{property}

Finally, we would like protocols to satisfy the following
\emph{strong liveness} property:
\begin{property}
\proplabel{strongliveness}
If all parties are compliant and willing to accept their proposed payoffs,
then all transfers happen
(all parties' payoffs are \textsc{All}).
\end{property}
It is a well-known result~\cite{FischerLP1985} that strong liveness is
possible only in periods when the communication network is
synchronous, 
ensuring a fixed upper bound on message delivery time.

\section{System Model}
For our purposes,
a \emph{blockchain} is a publicly-readable,
tamper-proof distributed ledger (or database) that
tracks ownership of \emph{assets} among various \emph{parties}.
An asset may be \emph{fungible}, like a sum of money,
or \emph{non-fungible}, like a theater ticket.
A party can be a person, an organization, or even a contract (see below).
We assume multiple independent blockchains,
each managing a different kind of asset.
We restrict our attention to blockchains that track asset ownership,
and to deals that transfer asset ownership from one party to another.
We assume all value transfers are explicitly represented on the blockchain.
For example, Alice does not send paper tickets to Carol off-chain.

A party can \emph{publish} an entry on a blockchain,
and it can \emph{monitor} one or more blockchains,
receiving notifications when other parties publish entries.
In our model,
publishing an entry usually executes a blockchain-resident program
called a \emph{contract}.
We will use contracts for \emph{escrow}:
an asset owner temporarily transfers ownership of an asset to a contract.
If certain conditions are met,
the contract transfers that asset to a \emph{counterparty},
and otherwise it refunds that asset to the original owner.

A party can publish a new contract on a blockchain,
or call a function exported by an existing contract.
Contract code and contract state are public,
so a party calling a contract knows what code will be executed.
Contract code must be deterministic because
contracts are typically re-executed multiple times by
mutually-suspicious parties.

A contract accesses data on the blockchain where it resides,
but it cannot directly access data from the outside world,
and cannot call contracts on other blockchains.
A contract on blockchain $A$ can learn of a change to
a blockchain $B$ only if some party explicitly informs $A$ of $B$'s change,
along with some kind of ``proof'' that the information about $B$'s state is correct.

In summary,
contract code is passive, public, deterministic, and trusted,
while parties are active, autonomous, and potentially dishonest.
Parties are given a protocol, which each party may or may not follow.
Parties may or may not act rationally~\cite{AiyerACDMP2005}.

We make standard cryptographic assumptions.
Each party has a public key and a private key,
and any party's public key is known to all.
Messages are signed so they cannot be forged,
and they include single-use labels (``nonces'')
so they cannot be replayed.

\section{How Cross-Chain Deals Work}
We can model a cross-chain deal in terms of a simple state
machine that tracks ownership of assets,
and whose transitions represent escrows, transfers, commits, and aborts.

Let $\cP$ be a domain of \emph{parties}, and $\cA$ a domain of \emph{assets}.
(A party may be a person or a contract,
and assets are digital tokens representing items of value.)
An asset has exactly one \emph{owner} at a time:
$\Owns(P,a)$ is \emph{true}
if $P$ and only $P$ owns $a$.

An active deal tentatively transfers asset ownership from one party to another.
We say a tentative transfer \emph{commits} if it becomes permanent,
and it \emph{aborts} if it is discarded.
A deal \emph{commits} if all its tentative transfers commit,
and it \emph{aborts} if all its tentative transfers abort.

While a deal is in progress, its state encompasses two maps,
$C: \cA \to \cP$ and $A: \cA \to \cP$, both initially empty.
$C(a)$ indicates the eventual owner of asset $a$ if the deal commits at $a$'s blockchain,
and $A(a)$ the owner if it aborts at that blockchain.
We use $\OwnsC(P,a)$ to indicate that $P$ will own $a$ if the deal commits,
and $\OwnsA(P,a)$ to indicate that $P$ will own $a$ if the
deal aborts.

Escrow plays the role of classical concurrency control,
ensuring that a single asset cannot be transferred to
different parties at the same time.
Here is what happens when $P$ places $a$ in escrow during deal $D$:
\begin{align*}
\text{Pre:\quad}  & \Owns(P,a) \\
\text{Post:\quad} &\Owns(D,a) \text{ and }
                    \OwnsC(P,a) \text{ and }
                    \OwnsA(P,a)
\end{align*}
The precondition states that $P$ can escrow $A$ only if $P$ owns $a$.
If that precondition is satisfied,
the postcondition states that ownership of $a$ is transferred
from $P$ to $D$ (via the escrow contract),
but $P$ remains the owner of $a$ in both $C$ and $A$,
since no tentative transfer has happened yet,
so $P$ would regain ownership of $a$ if $D$ were to terminate either way.
For example, when Bob escrows his tickets,
they become the property of the contract,
but should the deal terminate right then,
the tickets would revert to Bob.

Next we define what happens when party $P$ tentatively transfers an asset
(or assets) $a$
to party $Q$ as part of deal $D$.
\begin{align*}
\text{Pre:\quad}  &\Owns(D,a) \text{ and } \OwnsC(P,a)\\
\text{Post:\quad} &\OwnsC(Q,a)
\end{align*}
The precondition requires $a$ to be held in escrow by $D$,
with $P$ the indicated owner should $D$ commit.
If the precondition is satisfied,
the postcondition states that $Q$ will become the owner of the
transferred $a$ should $D$ commit
(at this point).
For example,
when Carol transfers 101 coins to Alice,
Alice becomes the owner of those coins in $C$.
Alice can then transfer 100 of those coins to Bob,
retaining one for herself, all in $C$.

Assets remain in escrow until the deal terminates.
If the deal terminates by committing,
the owners of assets in $C$ become the actual owners
(displacing $D$).
If it terminates by aborting,
the owners of assets in $A$ become the actual owners 
(again displacing $D$).

\subsection{Phases}
A deal is executed in the following phases.

\paragraph*{Clearing Phase}
  A market-clearing service discovers and broadcasts the participants,
  the proposed transfers, and possibly other deal-specific information.
The market clearing service may be centralized,
but \emph{it is not a trusted party},
because each party later decides for itself whether to participate.
The precise structure of the service is beyond the scope of this paper.
  
\paragraph*{Escrow Phase}
  Parties escrow their outgoing assets.
  For example, Bob escrows his tickets and Carol her coins.

\paragraph*{Transfer Phase}
  The parties perform the sequence of tentative ownership transfers according to
  the deal.
  For example,
  Bob tentatively transfers the tickets to Alice, who subsequently
  transfers them to Carol.

\paragraph*{Validation Phase}
  Once the tentative transfers are complete,
  each party checks that the deal is the same as proposed by the (untrusted)
  market-clearing service,
 that its incoming assets are properly escrowed
  (so they cannot be double-spent),
  and that the payoff defined by the incoming and outgoing assets is acceptable.
  For example, Carol checks that the tickets to be transferred are escrowed,
  that the seats are (at least as good as) the ones agreed upon,
  and that she is not about to somehow overpay.

In the classical two-phase commit protocol~\cite{BernsteinHZ1986},
validation usually requires no semantic checks;
instead a party agrees to prepare if appropriate locks are held and
persistence is guaranteed.
Under adversarial commerce, however,
an application-specific validation phase is needed for each party to
decide whether the proposed payoff is acceptable.
For example,
only Carol can decide whether the tickets she is about to purchase
are ones she wants.

\paragraph*{Commit Phase}
  The parties vote on whether to make the tentative transfers permanent.
  If all parties vote to commit,
  the escrowed assets are transferred to their new owners; otherwise
they are refunded to their original owners.

Cross-chain deals rely on two critical, intertwined mechanisms.
First,
the escrow mechanism prevents double-spending by making the
escrow contract itself the asset owner.
Care must be taken that assets belonging to compliant parties do not
remain escrowed forever in the presence of malicious behavior by
counterparties.
Second,
the commit protocol must be resilient in the presence of malicious
misbehavior.
A deviating party may be able to steal assets if it can convince
some parties that the deal completed successfully,
and others that it did not.
If a deviating party can prevent (or delay) a decision by the commit protocol,
then it can keep assets locked up forever (or a long time).

The principal challenge in implementing cross-chain deal protocols is
the design of the integrated escrow management and commit protocol.
Just as with classical transaction mechanisms,
there are many possible choices and trade-offs.
In the remainder of this paper,
we describe two cross-chain deal protocols, implemented via contracts,
one for the synchronous timing model,
and one for the semi-synchronous model,
each making different trade-offs concerning decentralization and fault-tolerance.

 \section{Timelock Protocol}
 \seclabel{hash}
 We now describe a \emph{timelock} commit protocol where
escrowed assets are released if all parties vote to commit.
Parties do not explicitly vote to abort.
Instead, timeouts are used to ensure that escrowed assets are not
locked up forever if some party crashes or walks away from the deal.
This protocol assumes a \emph{synchronous} network model where
blockchain propagation time is known and bounded.
 
In our example, Bob places his tickets into escrow, then transfers
them to Alice, who transfers them to Carol.  All parties examine their
incoming assets, and if the resulting payoff is acceptable,
the parties
vote to commit at the escrow contract on each asset's blockchain.  For
example, if Alice, Bob, and Carol all register commit votes on the
ticket blockchain, the escrow contract releases the tickets to Carol.
All votes are subject to timeouts: if any commit vote fails to appear
before the contract's timeout expires, the tickets revert to Bob.
(Symmetric conditions apply to Carol's coins.)

Because of the  adversarial nature of a deal,
each party is motivated to publish its vote on the blockchains
controlling its incoming assets (it is eager to be paid),
but not on the blockchains controlling its outgoing assets
(it is not so eager to pay).
To align the protocol with incentives,
one party's commit vote may be \emph{forwarded} from one escrow
contract to another by a motivated party.

For example,
Bob is motivated to publish his commit vote only on the coin blockchain.
However, once published, Bob's vote becomes visible to Carol,
who is motivated to forward  that vote to the ticket blockchain.
Carol's position is symmetric:
she is motivated to publish her vote only on the ticket blockchain,
but Bob is motivated to forward it to the coin blockchain.
Alice is motivated to send her vote to both blockchains.
(Nevertheless, no harm occurs if a party sends its commit vote
directly to any contract.)

A tricky part of this protocol is how to choose timeouts.
A protocol implementation that simply assigns each party a timeout for
each asset does not satisfy our notions of correctness,
as shown by the following example.

Suppose that the ticket and coin escrows assign Alice timeouts $A_t$
and $A_c$ respectively, and that Bob and Carol's commit votes have
already been published on both blockchains.  In one scenario, Alice
waits until just before $A_c$ to register her vote on the coin
blockchain, unlocking Carol's payment to Bob.  It may take time
$\Delta$ for Carol to observe Alice's vote and forward it to the ticket
blockchain, implying that $A_t \geq A_c+\Delta$.  In another scenario,
Alice waits until just before $A_t$ to register her vote on the ticket
blockchain, unlocking Bob's tickets for Carol.  It may take time
$\Delta$ for Bob to observe Alice's vote and forward it to the coin
blockchain, implying that $A_c \geq A_t+\Delta$, a contradiction.

To resolve this dilemma,
each escrow contract's timeout for a party's commit vote depends on
the length of the path along which that vote was forwarded.
For example,
if Alice votes directly,
her vote will be accepted only if it is received within $\Delta$
of the commit protocol's starting time.
This vote must be signed by Alice.
If Alice forwards a vote from Bob,
that vote will be accepted only if it is received within $2 \cdot \Delta$ of
the starting time,
where the extra $\Delta$ reflects the worst-case extra time needed to
forward the vote.
This vote must be signed first by Bob, then Alice.
Finally,
if Alice forwards a vote that Bob forwarded from Carol,
that vote will be accepted only if it is received within
$ 3 \cdot \Delta$, and so on.
This vote must be signed first by Carol, then Bob, then Alice.
We refer to this chain of signatures as the vote's \emph{path signature}.

In general, a vote from party $X$ received with path signature $p$ must
arrive within time $|p| \cdot \Delta$ of the pre-established commit
protocol starting time, where $|p|$ is the number of distinct
signatures for that vote.

\subsection{Running the Protocol}
Here is how to execute the phases of a timelock protocol.

\paragraph*{ Clearing Phase}
The market-clearing service broadcasts the following to all parties in
the deal: the deal identifier $D$, the list of parties $\plist$,
a commit phase starting time $t_0$ used to compute timeouts,
and the timeout delay $\Delta$.
Most blockchains measure time imprecisely,
usually by multiplying the current block height by the average block rate.
The choice of $t_0$ should be far enough in the future to take into
account the time needed to perform the deal's tentative transfers,
and $\Delta$ should be large enough to render irrelevant any imprecision in
blockchain timekeeping.
Because $t_0$ and $\Delta$ are used only to compute timeouts,
their values do not affect normal execution times,
where all votes are received in a timely way.
If deals take minutes (or hours), then $\Delta$ could be measured in hours (or days).

\paragraph*{ Escrow Phase}
Each party places its outgoing assets in escrow
through an escrow contract
\begin{equation*}
\escrow(D, \Dinfo, a).
\end{equation*}
on that asset's blockchain.  Here $D$ is the deal identifier
and $\Dinfo$ is the rest of the information about the deal ($\plist$,
$t_0$, and $\Delta$); the escrow requests takes effect only if the
party is the owner of $a$ and a member of the $\plist$.

\paragraph*{ Transfer Phase}
Party $P$ transfers an asset (or assets) $a$
tentatively owned by $P$ to party $Q$
by sending
\begin{equation*}
\transfer(D, a, Q).
\end{equation*}
to the escrow contract on the asset's blockchain.
The party must be the owner of $a$ and $Q$ must be in the $\plist$.

\paragraph*{ Validation Phase}
Each party examines its escrowed incoming assets to see if they
represent an acceptable payoff and the deal information
provided by the market-clearing service  is correct.
If so, the party votes to commit.

\paragraph*{ Commit Phase}
Each compliant party sends a commit vote to the escrow contract for each incoming asset.
(A compliant party is free to altruistically send commit votes to
other escrow contracts as well.)
A party uses
\begin{equation*}
\commit(D, v, p)
\end{equation*}
to vote directly and to forward votes
to the deal's escrow contracts,
where $v$ is the voter and $p$ is the path signature for $v$'s vote.
For example, if Alice is forwarding Bob's vote then $v$ is Bob,
and $p$ contains first Bob's signature, and then Alice's signature.
(Throughout, we assume that deal identifiers are unique to guard
against replay attacks.) 

A contract accepts a commit vote only if it arrives in time and is
well-formed: all parties in the path signature are unique and in the
$\plist$, and their signatures are valid and attest to a vote from $v$.
If the commit is accepted, that contract has now
accepted a vote from the party.

A contract releases the escrowed asset to the new owner(s) when it
accepts a commit vote from every party.  If the contract has not
accepted a vote from every party by time $t_0 + N \cdot \Delta$, where
$N$ is the number of parties, it will never accept the missing
votes, so the contract times out and refunds its escrowed assets to
the original owners.

\subsection{Well-formed Deals and Decentralization}
\label{sec:decent}

\begin{figure}
\centering
  \includegraphics[width=0.8 \hsize]{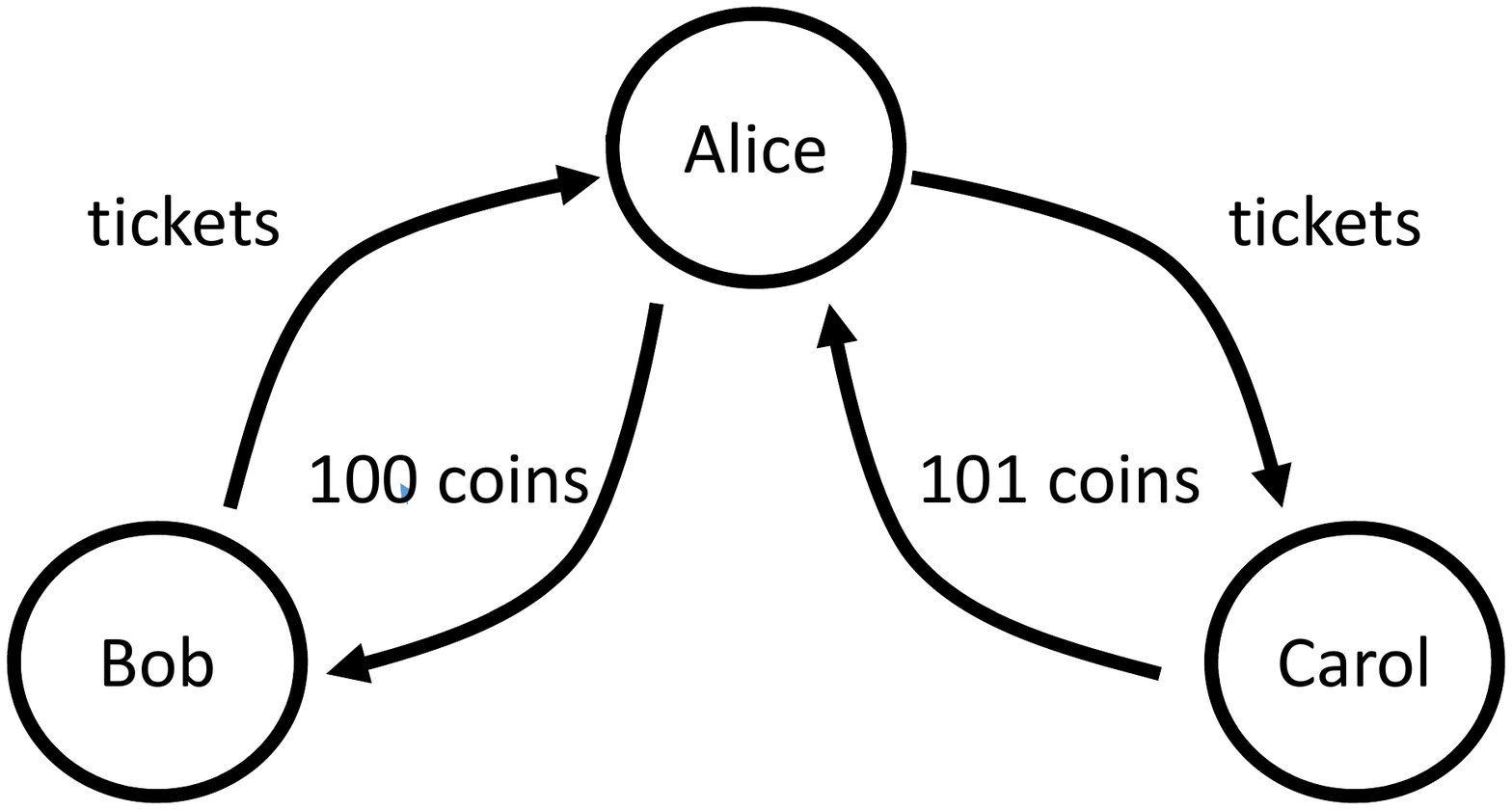}
  \caption{Alice, Bob, and Carol's deal expressed as a digraph}
  \figlabel{digraph}
\end{figure}

For ease of exposition, we can think of a deal as a
\emph{directed graph} (digraph),
where each vertex represents a party, and each arc represents a
transfer; the digraph for our deal is shown in \figref{digraph}.

If the deal digraph is not strongly connected, it can be shown
that the deal is not \emph{well-formed}, in the sense that it must
include one or more ``free riders'' that collectively take assets but
do not return any~\cite{Herlihy2018}.  The remaining parties have no
incentive to conform to any protocol executing such a deal, because
they could improve their payoffs  by excluding the free
riders\footnote{ Perhaps the free riders are sending some kind of
  hidden off-chain payments to the other parties,
  but support for hidden payments is beyond this paper's scope.}.

A compliant party first sends votes to the escrow contracts on  its incoming
assets' blockchains.
Then it monitors its outgoing assets' blockchains and forwards
other parties' votes to its incoming assets' blockchains.
No party needs to interact with any other blockchains.
For example, if Carol owns only altcoins,
then as part of the deal,
she can go to David to exchange her altcoins for coins,
and the deal can commit without parties such as Bob needing to
interact with the altcoin blockchain (or even know about it).
This protocol is \emph{decentralized} in the sense that
there is no single blockchain that must be accessed by all
compliant parties.

This voting protocol reflects the
(incentive-compatible) \emph{minimum} a compliant party must do.
Nothing prevents compliant parties from sending their commit votes
directly to arbitrary blockchains, although typically parties
will want to restrict themselves to blockchains they already use.
For example,
Bob might send his vote directly to the ticket blockchain,
perhaps hoping to speed up the commit process.
If he does so, he passes up a (very unlikely) opportunity to cheat Carol
if she (non-compliantly) fails to claim her tickets in time.

In the remainder of this section, we assume that
deals are well-formed, and the corresponding digraphs are strongly
connected, although the timelock protocol can handle ill-formed deals
if needed.

\subsection{What Could Possibly Go Wrong?}
What happens if parties choose to deviate?
For example,
suppose Bob wants to trade b-coins for c-coins,
and Carol wants the reverse.
Alice brokers their deal,
taking 101 b-coins (c-coins) from Bob (Carol),
then forwarding 100 coins to each counterparty,
keeping a 1-coin commission from each side.
This time, Alice happens to hold both kinds of coins.
To save time,
she transfers 101 of Bob's b-coins into her account,
and simultaneously transfers 100 of her own b-coins to Carol,
and similarly for Carol (in the opposite direction).
Now suppose Alice is infected by a virus,
and starts to behave irrationally.
After Bob releases his vote,
Alice stops communicating with him,
but she communicates normally with Carol.
Bob's timelock will eventually expire,
and he will take back his coins, so for him the deal aborted.
Carol, however, will transfer her 101 c-coins to Alice and receive her
100 b-coins, so for her the deal committed normally.

Although this outcome is not ``all-or-nothing'',
it is considered acceptable because all compliant parties
end up with acceptable outcomes,
But Alice, who deviated from the protocol, foots the bill,
paying Carol without being paid by Bob.
We emphasize the unenforceability of classical correctness properties
because it may seem counter-intuitive.
But if we want to verify that contracts are correct,
we must take care to use a realizable notion of correctness.

\subsection{Correctness}

\begin{theorem}
The timelock protocol satisfies safety.
\end{theorem}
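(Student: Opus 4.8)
The plan is to track, for an arbitrary execution and an arbitrary compliant party $P$, exactly which assets $P$ can end up losing and which it is guaranteed to receive, and then to check that $P$'s actual payoff is always reachable from the payoff $P$ would obtain on a full commit via the acceptability-preserving modifications allowed above (enlarging the incoming set, shrinking the outgoing set), with the degenerate case that $P$ gives up nothing handled by the ``something for nothing'' closure of \textsc{Nothing}. Write $(I,O)$ for the payoff $P$ validates (equivalently, what $P$ would receive if every escrow contract commits) and $(I',O')$ for $P$'s actual payoff.

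First I would pin down what $P$ can lose and gain, using the state-machine semantics: a tentative transfer only rewrites the $C$-map, abort restores the (unchanged) $A$-map, which still names the original owner, and the transfer precondition lets only the current $C$-owner move an asset further. Consequently the only assets $P$ can relinquish are ones it personally escrowed, and it relinquishes such an asset only if that asset's escrow contract commits; hence $O' \subseteq O$. Symmetrically, once $P$ has validated and seen an incoming asset $a$ escrowed with $\OwnsC(P,a)$, no one but $P$ can move $a$ afterward, so $P$ receives $a$ as soon as $a$'s escrow contract commits; hence $I'$ contains all of $I$ once the relevant contracts are known to commit.

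The heart of the proof is a split on whether $P$ gives up anything. If $O' = \emptyset$ we are done: $(I',\emptyset)$ is obtained from \textsc{Nothing} by enlarging the incoming set, hence acceptable (this subsumes a $P$ that never votes, since then no contract can ever accept votes from all $N$ parties, so every contract times out at $t_0 + N\cdot\Delta$ and refunds). If $O' \ne \emptyset$, fix an asset $c$ that $P$ escrowed and lost; then $c$'s escrow contract committed, so for each of the $N$ parties $Y$ it accepted a path-signature bundle naming $Y$ as voter. By unforgeability of signatures, uniqueness of deal identifiers, and the contract's well-formedness check (every signature in a path signature is valid and attests a vote from the named voter), $Y$ must actually have signed a commit vote for this deal; in particular $P$ did, which a compliant $P$ does only after a successful validation, so the payoff $(I,O)$ it validated is acceptable to it. Moreover a compliant $P$ monitors the blockchain of every asset it escrowed and relays, to all of its incoming-asset blockchains, every commit vote accepted there; since $c$'s contract accepted a vote from each party, $P$ relays all $N$ of them, and (by the timeout discipline discussed below) each of $P$'s incoming escrow contracts also collects a vote from every party and commits. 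Hence $I' \supseteq I$ while $O' \subseteq O$, making $(I',O')$ an acceptable modification of $(I,O)$. Either way $P$'s payoff is acceptable.

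I expect the delicate step to be the timing inside the second case: the claim that whenever an escrow contract accepts a vote carrying path signature $p$ — so that vote arrived within $|p|\cdot\Delta$ of $t_0$ — a compliant relayer can append its own signature and have the length-$(|p|+1)$ copy included on another contract's chain within $(|p|+1)\cdot\Delta$ of $t_0$, ahead of that contract's deadline for length-$(|p|+1)$ votes. This is precisely why the timeout schedule grows by one $\Delta$ per signature, and making it airtight means checking the synchronous-model bound (observing a vote on one chain and getting a re-signed copy onto another costs at most $\Delta$), that forwarded path signatures never exceed length $N$ (no party appears twice, and a vote accepted on $P$'s outgoing chain cannot already carry $P$'s signature, since a compliant $P$ only publishes votes toward its incoming chains), and that $P$ starts monitoring early enough to catch every vote before the global $t_0 + N\cdot\Delta$ cutoff. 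The cryptographic ``commit $\Rightarrow$ everyone signed'' lemma and the bookkeeping of what a party may lose are routine by comparison.
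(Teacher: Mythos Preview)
Your approach is correct and matches the paper's: the heart of both arguments is that if an outgoing escrow of a compliant party $P$ commits, then every party's vote appears there with some path signature $p$, and $P$ can append its own signature and forward to each incoming escrow within the $(|p|{+}1)\Delta$ deadline. Your write-up is in fact more careful than the paper's in making the case split on $O'=\emptyset$, the inclusion $O'\subseteq O$, and the cryptographic ``commit implies everyone actually signed'' step explicit; the paper compresses all of that into one line and jumps straight to the contradiction.

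One sentence needs adjusting. Your claim that ``a vote accepted on $P$'s outgoing chain cannot already carry $P$'s signature, since a compliant $P$ only publishes votes toward its incoming chains'' is not quite right: another party can observe $P$'s signed forward on one of $P$'s \emph{incoming} chains and re-forward it (appending its own signature) to one of $P$'s \emph{outgoing} chains, so $P$'s signature can appear in $p$. The paper's fix slots directly into your argument: if $P$'s signature is already in $p$, then $P$ has already forwarded that voter's vote to every incoming contract and there is nothing left to do; only when $P\notin p$ does $P$ need to relay, and then $|p|+1\le N$ and the timing goes through exactly as you describe.
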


\begin{proof}
By construction,
transferring a compliant party $X$'s escrowed incoming and outgoing
assets is an acceptable payoff for $X$.
Suppose by way of contradiction that
$X$'s outgoing asset $a$ is released from escrow and transferred
(with commit votes from every party),
but the escrow for $X$'s incoming asset $b$ times out and is refunded
because of a missing vote from party $Z$.
Suppose $Z$'s commit vote at $a$'s contract arrived with path signature $p$.
The signatures in $p$ cannot include $X$'s,
because $X$ is compliant and would have already
forwarded $Z$'s vote to $b$.  $Z$'s vote must have arrived at $a$
before time $t_0 +|p| \cdot \Delta$. Since $X$ is compliant, it
forwards that vote to $b$'s contract before time
$t_0 + (|p|+1) \cdot \Delta$,
where that vote is accepted, a contradiction.
\end{proof}

\begin{theorem}
The timelock protocol satisfies weak liveness:
no compliant party's outgoing assets are locked up forever.
\end{theorem}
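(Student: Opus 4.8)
The plan is to follow a single outgoing asset $a$ that a compliant party $X$ has placed in escrow on $a$'s blockchain, and to show that the escrow contract holding $a$ necessarily reaches a terminal state --- release or refund --- by a bounded time, after which $a$ is no longer escrowed. Since $a$ and $X$ are arbitrary, this gives the theorem.

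First I would note that the contract $\escrow(D, \Dinfo, a)$ can dispose of $a$ in only two ways: it releases $a$ to the new owner(s) the moment it has accepted a commit vote from every one of the $N$ parties, and otherwise, at time $t_0 + N \cdot \Delta$, it times out and refunds $a$ to its original owner(s). So it suffices to show one of these events occurs within bounded time. Second, I would cap the window during which the contract can still accept a vote: a vote arrives with a path signature $p$ whose signers are distinct members of $\plist$, so $|p| \le N$, and the contract accepts it only if it arrives within $t_0 + |p| \cdot \Delta \le t_0 + N \cdot \Delta$ of the start. Hence after time $t_0 + N \cdot \Delta$ the set of parties whose votes the contract has accepted is frozen: either it already includes all $N$ parties, in which case $a$ has been released, or it never will.

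Third, in the latter case I would use the standing assumption that a compliant party can always publish to a blockchain: once the block height corresponding to $t_0 + N \cdot \Delta$ has passed, $X$ invokes the contract's timeout branch, and $a$ reverts to its original owner within one further round of blockchain propagation. In either case $a$ leaves escrow by a bounded time; if it was released it is no longer $X$'s asset, and if it was refunded it is back with $X$ and unescrowed, so in neither case is any outgoing asset of a compliant party escrowed forever.

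The only delicate point --- and thus the main obstacle --- is the second step: making precise that the timeout schedule, which grows with path-signature length, is nonetheless uniformly bounded because path signatures cannot exceed length $N$. This is what rules out the contract "waiting indefinitely" for a late-arriving vote. A secondary point worth spelling out is the modeling of a reactive contract, whose refund branch must be triggered by some party's call: one should observe that the compliant owner $X$ is precisely the party with both the ability and the incentive to issue that call after the deadline.
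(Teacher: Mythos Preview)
Your proposal is correct and follows the same idea as the paper: the escrow contract's timeout at $t_0 + N \cdot \Delta$ guarantees that any asset a compliant party escrows is eventually either released or refunded. The paper's own proof is a single sentence (``Every escrow created by a compliant party has a finite timeout''), so your version is considerably more detailed---in particular, your observations that path signatures are bounded in length by $N$ and that the compliant owner $X$ is the party who triggers the refund branch are both sound elaborations that the paper leaves implicit---but the underlying argument is the same.
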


\begin{proof}
Every escrow created by a compliant party has a finite timeout.
\end{proof}

\begin{theorem}
The timelock protocol satisfies strong liveness.
\end{theorem}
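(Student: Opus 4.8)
The plan is to show that under the hypotheses of strong liveness --- all parties compliant and willing to accept their proposed payoffs --- every escrow contract accepts a commit vote from every party before it could time out (at time $t_0 + N \cdot \Delta$, where $N$ is the number of parties), so every escrow releases its asset to the owner recorded in the commit map $C$. Since the transfer phase is carried out by compliant parties, $C$ then agrees with the deal specification, so all of the deal's transfers take place and every party's payoff is \textsc{All}.

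First I would record what compliant parties do in the commit phase. Because every party is willing to accept its payoff, and all parties follow the protocol, each one reaches the validation phase and votes to commit; because every party is compliant, at the commit-phase starting time $t_0$ each party $X$ sends $\commit(D, X, \ang{X})$ --- its own vote, carried by a length-one path signature --- to the escrow contract of each of its incoming assets, and thereafter, whenever $X$ observes a vote on the blockchain of one of its outgoing assets, it appends its signature and forwards that vote to those same contracts. A directly sent vote arrives within $\Delta$, i.e.\ with a path signature of length $1$ and before $t_0 + \Delta$, hence is accepted.

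The crux --- and the step I expect to be the main obstacle --- is to show that every party's vote reaches \emph{every} escrow contract with a well-formed path signature of length at most $N$. I would argue this from the strong connectivity of the deal digraph (\figref{digraph}). Note that if $P \to Q$ is an arc carrying asset $a$, then $Q$ (who holds $a$ incoming) votes at $a$'s escrow contract directly, and for each arc entering $Q$ carrying some asset $b$, party $Q$ --- who monitors $a$'s blockchain as an outgoing-asset blockchain --- forwards any vote present at $a$'s contract on to $b$'s contract, appending exactly one fresh signature. Fix a party $X$ and a target escrow contract, and let $Q$ be a party holding the target's asset as an incoming asset; if $Q$ can be taken to be $X$ we are done, since then $X$ votes at the target directly. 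Otherwise, by strong connectivity there is a simple directed path $Q = u_0 \to u_1 \to \cdots \to u_k = X$ in the deal digraph; chasing this path through the forwarding correspondence above exhibits a chain of forwards that carries $X$'s vote from a contract where $X$ votes directly (the one on the last arc of the path) to the target contract, the successive forwarders being $u_{k-1}, \dots, u_1, u_0$. Thus the path signature delivered to the target lists exactly the distinct parties $u_0, \dots, u_k$, so it is well-formed and has length $k+1 \le N$ (a simple path visits at most $N$ vertices); the vote therefore arrives by $t_0 + N \cdot \Delta$ and is accepted.

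The delicate part is the bookkeeping in that third step: being precise about which contracts a party votes at directly versus forwards to, orienting the chain of forwards correctly against the direction of the deal digraph's arcs, and checking that insisting on a \emph{simple} connecting path is exactly what keeps all signatures in the delivered path distinct (hence the path well-formed) and its length at most $N$ (hence the vote never misses its $|p| \cdot \Delta$ deadline). Once this is in place the rest is routine: by time $t_0 + N \cdot \Delta$ every escrow contract has accepted a commit vote from every party, so it releases its escrowed asset to the new owner(s) recorded in $C$ rather than timing out; and since all parties followed the protocol in the transfer phase, $C$ realizes precisely the transfers named in the deal specification, so all transfers happen.
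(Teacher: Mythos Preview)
Your proposal is correct and follows essentially the same approach as the paper's proof: compliant parties vote at their incoming assets' contracts, forward votes observed on outgoing assets' contracts, and strong connectivity of the deal digraph guarantees every vote reaches every contract in time. The paper's proof is a three-sentence sketch that asserts ``all commit votes are forwarded to all contracts in time'' without spelling out the path-chasing; your version adds the explicit bookkeeping (simple path from $Q$ to $X$, forwarders $u_{k-1},\dots,u_0$, path signature of length $k+1\le N$ with distinct entries) that makes the timing and well-formedness claims precise, but the underlying idea is identical.
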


\begin{proof}
If all parties are compliant,
they all send commit votes to the escrow contracts for
their incoming assets.
Each time a new commit vote appears on an outgoing asset's contract,
the party forwards it to its incoming assets' contracts.
Since the deal is well-formed, the deal digraph is strongly connected,
and all commit votes are forwarded to all contracts in time.
\end{proof}

Suppose that Bob acquires Alice and Carol's votes on time, and
forwards them to claim the coins, but Alice and Carol are driven
offline before they can  forward Bob's vote to the ticket
blockchain, so Bob ends up with both the coins and the
tickets.
Technically, Alice and Carol have deviated from the protocol by not
claiming their assets in time.
As a countermeasure, $\Delta$ should be chosen large enough to make
sustained denial-of-service attacks prohibitively expensive.
For similar reasons, the Lightning payment network~\cite{lightning}
employs \emph{watchtowers}~\cite{watchtower} to monitor escrow
contracts to act on the behalf of off-line parties.

\section{CBC Protocol}
 \seclabel{cbc}
  Now we describe a commit protocol that assumes only a semi-synchronous communication model~\cite{Dwork:1988},
Since we cannot use timed escrow,
we allow parties to vote to abort if validation fails,
or if too much time has passed.

Unlike in the classical two-phase commit protocol~\cite{BernsteinHZ1986},
there is no coordinator; instead we use a
special blockchain, the \emph{certified blockchain}, or \emph{CBC}, as
a kind of shared log. The CBC might be a stand-alone blockchain or one
of those already being used in the deal.

Instead of voting on individual assets,
each party votes on the CBC whether to commit or abort the entire deal.
The CBC records and orders these votes.
A party can extract a \emph{proof} from the CBC that particular
votes were recorded in a particular order.
A party claiming an asset (or a refund) presents a proof of commit
(or abort) to the contract managing that asset.
The contract checks the proof's validity and carries out the
requested transfers if the proof is valid.
A \emph{proof of commit} proves that every party voted to commit the
deal before any party voted to abort.
A \emph{proof of abort} proves that some party voted to abort
before every party voted to commit.
A party can rescind an earlier commit vote by voting to abort
(for example, if the deal is taking too long to complete).
To ensure strong liveness,
once a compliant party has voted to commit,
it must wait long enough to give the other parties a chance to vote
before it changes its mind and votes to abort.

Recall that a commit protocol is \emph{decentralized} if there is
no single blockchain accessed by all parties in any execution
(see Section~\ref{sec:decent}).
The CBC protocol is not decentralized in this sense,
because the CBC itself is a centralized ``whiteboard'' shared by all
parties.
This loss of decentralization is inevitable:
no protocol that tolerates periods of asynchrony can be decentralized.
A complete formal proof is out
of scope, but we outline an argument adapted from Fischer,
Lynch, and Paterson~\cite{FischerLP1985}.  If all parties are
compliant, then if the deal commits (resp. aborts) at any asset's
blockchain, it must commit (resp. abort) at all of them.  Initially,
the deal's state is \emph{bivalent}: both commit and abort are
possible outcomes.  But the deal's state cannot remain bivalent
forever, so it must be possible to reach a (bivalent) \emph{critical
state} where each party is about to take a decisive step that will
force the protocol to enter a \emph{univalent} state where either a
commit outcome or an abort outcome becomes inevitable.  A potentially
decisive step forcing an eventual commit cannot take place at a
different blockchain than a potentially decisive step forcing an
eventual abort, because then it would be impossible to determine which
happened first, hence which one was truly decisive.  It follows that
in any such critical state,
all parties must be about to call the same contract,
violating decentralization.

\subsection{Running the Protocol}
Here is how to execute the phases of a CBC protocol.

\paragraph*{Clearing Phase}
The market-clearing service broadcasts a unique identifier $D$ and a
list of participating parties $\plist$ (this protocol does not require
the $t_0$ starting time or $\Delta$).
One party records the start of the deal on the CBC by
publishing an entry:
\begin{equation*}
\startD(D, \plist).
\end{equation*}
The calling party must appear in the $\plist$.
If more than one $\startD$ for $D$ is recorded on the CBC,
the earliest is considered definitive.

\paragraph*{Escrow Phase}
Each party places its outgoing assets in escrow:
\begin{equation*}
\escrow(D, \plist, h, a, \ldots)
\end{equation*}
Here, $h$ is the hash that identifies a particular $\startD$ 
entry on the CBC that started the deal; it is needed
in case there is more than one such entry on the CBC.  The
ellipsis indicates arguments that vary depending on the algorithm used
to implement the CBC, as discussed in \secref{proof}.  As in the
timelock protocol, the sender must be the owner of asset $a$ and a
member of $\plist$.

\paragraph*{Transfer Phase}
Party $P$ transfers an asset (or assets) $a$
tentatively owned by $P$ to party $Q$
by sending
\begin{equation*}
\transfer(D, a, Q).
\end{equation*}
to the escrow contract on the asset's blockchain.
$P$ must be the owner of $a$ and $Q$ must be in the $\plist$.

\paragraph*{Validation Phase}
As before,
each party checks that its proposed payoff is acceptable and
that assets are properly escrowed with the correct $\plist$ and $h$.

\paragraph*{Commit Phase}
Each party $X$ publishes either a commit or abort vote for $D$ on the CBC:
\begin{eqnarray*}
\commit(D, h, X) \text{\quad or \quad} \abort(D, h, X)
\end{eqnarray*}
where $D$ is the deal identifier, and $h$ is the $\startDeal$.
As usual,
each voter must be in the start-of-deal $\plist$.

\subsection{What Could Possibly Go Wrong?}
When things go wrong,
the CBC protocol permits fewer outcomes than the timelock protocol,
because all compliant parties agree on whether the deal
committed or aborted.
Nevertheless, because we cannot constrain the behavior of deviating parties,
even this protocol cannot enforce the classical ``all-or-nothing''
property of atomic transactions.
For example,
if (deviating) Carol erroneously sends 1001 coins to Alice,
instead of the 101 expected, and all parties vote to commit,
then (compliant) Alice ends up with a commission of 901 coins,
an outcome that is neither ``all'' nor ``nothing'',
even for compliant parties.
Of course, such an outcome is unlikely in practice,
but such distinctions matter when reasoning about correctness.
Both the timelock and CBC protocols satisfy the (informal) safety
property that no compliant party can end up ``worse off.''

\subsection{Correctness}
The correctness of the CBC protocol is mostly self-evident:
safety is satisfied because complaint parties agree on whether
a deal commits or aborts.
Weak liveness is satisfied because any compliant party whose assets
are locked up for too long will eventually vote to abort,
and strong liveness is satisfied in periods when the network is
synchronous because every party votes to commit before any
party votes to abort.

\subsection{Cross-Chain Proofs}
\seclabel{proof}
It is easy for (active) parties to ascertain whether a deal committed
or aborted; 
it is not so easy for passive contracts,
which cannot directly observe other blockchains, to do so.

A deal's \emph{decisive vote} is the one that determines whether
the deal commits or aborts.
A straightforward approach is to present each contract with a
subsequence of the CBC's blocks,
starting with the deal's first $\startX$ record,
and ending with its decisive vote.
But how can the contract tell whether the blocks presented are really
on the CBC? 
The answer depends partly on the kind of algorithm underlying the CBC
blockchain.

\subsection{Byzantine Fault-Tolerant Consensus}
Let us assume the CBC relies on \emph{Byzantine fault-tolerant} (BFT)
consensus~\cite{AbrahamGM2018,hyperledgerfabric,PBFT,tendermint-github}.
BFT protocols guarantee safety even when communication is
asynchronous,
and they ensure liveness when communication becomes synchronous after the GST.

Blocks are approved by a known set of $3f+1$ \emph{validators},
of which at most $f$ can deviate from the protocol.
(The details of how validators reach consensus on new blocks
are not important here.)
To support long-term fault tolerance, the
blockchain is periodically \emph{reconfigured} by having at least
$2f+1$ current validators elect a new set of validators.
For ease of exposition,
assume each block contains the next block's group of validators and their keys.

Each block in a BFT blockchain is vouched for by a certificate
containing at least $f+1$ validator signatures of that block's hash.
(Any $f+1$ signatures are enough because at least one of them
must come from an honest validator.)
The sequence of blocks and their certificates can be used as a proof.  The contract on the asset blockchain will be able to
check this proof as long as it knows the first block's set of
validators, accomplished by passing the $3f+1$ initial block validators
as an argument to each of the deal's escrow
contracts (in place of the ellipses).  Parties must identify
correct validators when putting assets in escrow, and they must
check validators' credentials before voting to commit.

Checking the proof as just described is a lot of work; the proof is
likely to be spread over many blocks, each containing a large number
of entries.  Furthermore, we cannot shorten the proof by omitting irrelevant
block entries, because then a malicious party might fool a contract into making a wrong
decision. But there are many ways to make BFT proofs more efficient.

A straightforward optimization is to take advantage of the
fact that the CBC has validators. This allows the parties to
request
certificates from the CBC.  Such a
certificate would vouch for the current state of the deal (active,
committed, aborted).
This certificate alone would constitute a proof provided the
original validators are still active,
and otherwise the party must also provide the chain of validators
across each reconfiguration.

\if 0
Even better 

 for the part to have the validators track the
status of recent deals (active, committed, aborted) on the blocks
themselves.  A proof of a deal's commit or abort is then just the
block containing the deal's decisive vote plus its certificate.
\fi

\subsection{Proof-of-work (Nakamoto) Consensus}
Proofs of commit or abort generated by a CBC implemented using
proof-of-work consensus
(like Bitcoin~\cite{bitcoin} or Ethereum~\cite{ethereum})
are possible, but care is needed because such blockchains lack \emph{finality}:
any proof might be contradicted by a later proof,
although forging a later, contradictory proof becomes more expensive to
the adversary the longer it waits.
(Kiayias \emph{et al.}~\cite{KiayiasLS2016,cryptoeprint:2017:963}
propose changes to standard PoW protocols that would make
such ``proofs-of-proof-of-work'' more compact.)

Here is a scenario where Alice can construct a fake ``proof of abort''
for a proof-of-work CBC.  As soon as the deal execution starts, Alice
(perhaps aided by partners in crime) privately mines a block that
contains an $\abort$ vote from Alice.  When her part of the deal is
complete, however, Alice publicly sends a $\commit$ vote to the CBC.
If, by the time all parties have voted $\commit$, Alice was able to
mine a private $\abort$ block, then Alice can use that fake proof of
abort to halt outgoing transfers of her assets, while using the
legitimate proof of commit to trigger incoming transfers.

In the spirit of proof-of-work,
such an attack can be made more expensive by requiring a proof of commit
or abort to include some number of \emph{confirmation} blocks beyond
the one containing the decisive vote,
forcing Alice to outperform the rest of the CBC's miners for an
extended duration.
To deter rational cheaters,
the number of confirmations required should vary depending on
the value of the deal,
implying that high-value deals would take longer to resolve than
lower-value deals.

To summarize, while it is technically possible to produce commit or
abort proofs from a proof-of-work CBC, the result is likely to be slow
and complex.  In the same way a proof-of-work blockchain can fork,
a ``proof-of-proof-of-work''~\cite{KiayiasLS2016} can be contradicted
by a later ``proof-of-proof-of-work''.
Similarly, to make the production of contradictory proofs expensive,
the proof's difficulty must be adjusted to match the value of the
assets transferred by the deal.  By contrast, a BFT certificate of
commit or abort is final,
and independent of the value of the deal's assets.

\section{Cost Analysis}
This section analyzes the costs associated with each of the protocols.
The code shown here is not intended to be a detailed
implementation; it is only intended to illustrate how such
implementations might be organized.
To compare their implementation costs,
we use a cost model inspired by the Ethereum~\cite{yellowpaper} blockchain,
currently the best-developed platform.
The costs of non-PoW blockchains are likely to be similar.

\tabref{gas} summarizes gas costs for a deal with $n$ parties,
$m$ assets, and $t \geq n$ transfers.

\subsection{Gas Costs}
\begin{table*}[htb]
  \caption{Gas costs}
  \tablabel{gas}
\begin{center}
    \begin{tabular}{| c | c | c | c |}
    \hline
    Protocol & Escrow        & Transfer and Validation & Commit  or Abort\\ \hline
    Timelock & $O(m)$ writes & $O(t)$ writes           &$O(m n^2)$ sig. ver. $+$ $O(m)$ writes \\ \hline
    CBC      & $O(m)$ writes & $O(t)$ writes           &$O(m(f+1))$ sig. ver. $+$ $O(m)$ writes\\ \hline
    \hline
    \end{tabular}
\end{center}
\end{table*}
\begin{figure*}[htb]
\centering
  \begin{lstlisting}
contract EscrowManager {
    ERC20Interface asset;            // contract holding assets
    mapping(address => uint) escrow; // escrowed assets`\linelabel{escrow}`
    mapping(address => uint) onCommit; // result of tentative transfers`\linelabel{onCommit}`
    ...
    // transfer into escrow account
    function escrow (uint amount) public {
        require (asset.transferFrom(msg.sender, this, amount));`\linelabel{xferfrom}`
        escrow[msg.sender] = escrow[msg.sender] + amount;`\linelabel{eupdate}`
        onCommit[msg.sender] = onCommit[msg.sender] + amount;`\linelabel{ocupdate}`
    }
    // tentative transfer
    function transfer (address to, uint amount) public {
        require (onCommit[msg.sender] >= amount);
        onCommit[msg.sender] = onCommit[msg.sender] - amount;`\linelabel{ocupdate2}`
        onCommit[to] = onCommit[to] + amount;`\linelabel{ocupdate3}`
    } ... }
  \end{lstlisting}
  \caption{Pseudocode code for Escrow and Transfer (some details omitted)}
  \figlabel{escrowtransfer}
\end{figure*}
To make denial-of-service attacks prohibitively expensive,
virtual machines that execute contracts typically charge for each instruction executed.
In Ethereum~\cite{yellowpaper},
this charge is expressed in terms of \emph{gas} units,
whose value (in Ether, the blockchain's native currency) varies
according to demand.
For example,
the gas cost of simple arithmetic operations or accesses to short-lived memory is in single digits,
and control flow or read operations from long-lived storage is in double or triple digits.
In general, gas costs are dominated by two kinds of operations:
writing to long-lived storage is (usually) 5000 gas,
and each signature verification is 3000 gas.

\sloppy
To illustrate our gas cost analysis,
\Figref{escrowtransfer} shows a fragment of a pseudocode
implementation of a generic \var{EscrowManager} contract for a fungible
asset, modeled as an ERC20-standard token~\cite{erc20}.
The heart of the \var{EscrowManager} contract is a pair of mappings:
\var{escrow} records how many tokens each party has escrowed
(\lineref{escrow}),
and \var{onCommit} records how many tokens each party would receive if
the deal commits (\lineref{onCommit}).
For clarity, some error checking has been omitted.

\paragraph*{ Escrow Phase}
Each party calls the $escrow$ function to escrow some number of
tokens.
This function incurs 2 storage writes (in a function call) to transfer the token
from the sender to the escrow contract (\lineref{xferfrom}),
and 1 storage write each to update the \var{escrow} (\lineref{eupdate})
and the \var{onCommit} (\lineref{ocupdate}) maps,
for a total of 4 storage writes.
Globally, the escrow phase incurs $O(m)$ gas costs.

\paragraph*{ Transfer Phase}
Each party calls the \var{transfer} function to transfer some number of
escrowed tokens to another party.
This function incurs 1 storage write to decrement the sender's
tentative \var{onCommit} balance (\lineref{ocupdate2}),
and another to increment the recipient's balance (\lineref{ocupdate3}).
Globally, the transfer phase incurs $O(t)$ gas costs.

\paragraph*{ Validation  Phase}
Each party monitors its incoming and outgoing escrow contracts to ensure
it is satisfied with the assets it is due to acquire and relinquish.
This computation takes place entirely at the parties,
and incurs no gas cost.

\begin{figure*}[htb]
  \centering
  \begin{lstlisting}
contract TimelockManager is EscrowManager{
    address[] parties;		// participating parties`\linelabel{parties}`
    address[] voted;		// which parties have voted`\linelabel{whovoted}`
    ...
    function commit (address voter, address[] signers, bytes32[] sigs) public {
	require (now < start + (path.length() * DELTA)); // not timed out`\linelabel{timeout}`
	require (parties.contains(voter));		 // legit voters only`\linelabel{legit}`
	require (!voted.contains(voter));		 // no duplicate votes`\linelabel{dup}`
	require (checkUnique(signers));			 // no duplicate signers`\linelabel{unique}`
	for (int i = 0 ; i < signers.length; i++) {
	    require (checkSig(voter, signers[i], sigs[i])); // expensive`\linelabel{verify}`
	}
	voted.push(voter);                               // remember who voted`\linelabel{voted}`         
    }}
  \end{lstlisting}
  \caption{Pseudocode Fragment for Timelock contract voting (some details omitted)}
  \figlabel{timelock}
\end{figure*}

\paragraph*{ Timelock Protocol Commit Phase}
Timelock escrow contracts verify commit path signatures.
Note that signatures are generated by parties,
not by contracts,
so while signature generation incurs computation costs at parties,
it incurs no gas costs at contracts.

\Figref{timelock} shows a pseudocode fragment for a timelock escrow contract.
The contract records the set of parties participating in the deal
(\lineref{parties}) and which ones have voted (\lineref{parties}).
The $commit$ function takes as arguments the
voter, the set of signers, and their signatures.  It checks that the
deal has not timed out (\lineref{timeout}), that the voter is
legitimate (\lineref{legit}), that the vote has not already been
recorded (\lineref{dup}), and that there are no duplicate signers
(\lineref{unique}).  The expensive steps are verifying each of the
signatures (\lineref{verify}), and recording the voter
(\lineref{voted}) in long-lived storage.

Each escrow contract verifies a vote from each of $n$ parties,
and each party's vote could have been signed by up to $n-1$ others,
yielding a worst-case per-contract bound of $O(n^2)$ signature verifications,
plus a constant number of storage writes for other bookkeeping.
Since there are $m$ contracts,
the timelock commit protocol incurs an $O(m n^2)$ global gas cost.
In the best case,
a deal can abort with no signature verifications,
but in the worst case,
aborting can cost almost as much as committing.

\paragraph*{ CBC Protocol Commit Phase}
\begin{figure*}[htb]
  \centering
  \begin{lstlisting}
contract CBCManager is EscrowManager{
    address[] validators;	// CBC validators`\linelabel{validators}`
    ...
    // check commit proof is valid
    function commit (address[] signers, bytes32[] sigs) public {
	require (checkUnique(signers));         // no duplicate signers`\linelabel{cbcunique}`
	require (validators.contains(signers)); // only validators voted`\linelabel{cbcvalid}`
	require (signers.length >= f+1);	// enough validators voted`\linelabel{enuf}`
	for (int i = 0 ; i <  f+1; i++) {
	    require (checkSig(signers[i], sigs[i])); // expensive`\linelabel{cbcver}`
	}
        outcome = COMMITTED;                    // remember we committed
    }  ... }
  \end{lstlisting}
  \caption{Pseudocode Fragment for CBC proof-checking (some details omitted)}
  \figlabel{cbcproof}
\end{figure*}
Escrow contracts check proofs from the CBC by verifying
that they are correctly signed by enough validators.
\Figref{cbcproof} shows a pseudocode fragment for an escrow contract
for a CBC using an underlying BFT consensus protocol that tolerates
$f$ Byzantine validators .
We assume the optimization where parties request status certificates
from the CBC;
for brevity, we assume there have been no reconfigurations.

The contract keeps track of the CBC's current set of validators
(\lineref{validators}); it also knows their public keys.  The $commit$
function takes as arguments the set of signers and their signatures.
It checks that there are no duplicate validators
(\lineref{cbcunique}), that all signers are validators
(\lineref{cbcvalid}), and that there are enough votes
(\lineref{enuf}).  The expensive step is verifying each of the
validator signatures (\lineref{cbcver}); there will also be a constant
number of storage writes to record the outcome and to update the
escrow and ownership mappings.

Each contract verifies $f+1$ signatures,
or $(k+1)(f+1)$ if the set of validators has changed $k$ times.
The global gas cost is $O(m (f+1))$ signature verifications plus
a constant number of storage writes to update the escrow mappings.

\subsection{Time Costs}
\begin{table*}[htb]
  \caption{Delays for synchronous communication}
  \tablabel{time}
\begin{center}
    \begin{tabular}{| c | c | c | c | c |}
    \hline
    Protocol & Escrow   & Transfer and Validation & Commit      & Abort             \\ \hline
    Timelock & $\Delta$ & $k \Delta$ or $\Delta$  &$O(n) \Delta$& $O(n) \Delta$     \\ \hline
    CBC      & $\Delta$ & $k \Delta$ or $\Delta$  &$O(1) \Delta$& per-party timeout \\ \hline
    \hline
    \end{tabular}
\end{center}
\end{table*}
We analyze each commit protocol's timing delays when the network is synchronous,
with bound $\Delta$ on the time needed both to change a blockchain state,
and to have that change observed by any interested party.
The results are summarized in \tabref{time};
here we assume each asset is transferred $k$ times.

For both protocols,
if all parties are conforming,
the escrow phase takes time at most $\Delta$,
since every party updates its outgoing assets' escrow contracts in
parallel.
Similarly,
the transfer phase takes time at most $k \cdot \Delta$.
It may be possible to execute transfers concurrently,
in which case this phase takes time at most $\Delta$.
At the end of the transfer phase,
validation is local and immediate.

\paragraph*{ Timelock Protocol Commit Phase}
If each party sends its commit vote only to the
blockchains managing its incoming assets,
then the worst-case duration of the commit phase is proportional to the longest sequence of transfers,
which is bounded by $n \Delta$.

\paragraph*{CBC Protocol Commit Phase}
All conforming parties send their votes to the CBC in parallel,
and these votes are available very quickly when the CBC is implemented
using a BFT protocol such as from the PBFT family~\cite{AbrahamGM2018,PBFT}.
It requires at most another $\Delta$ for the escrow contracts to
transfer or refund their assets.

\section{Related Work}
\seclabel{related}
As noted,
in a \emph{cross-chain swap}~\cite{bitcoinwiki,bip199,decred,Herlihy2018,tiersnolan,barterdex,ZakharyAE2019,Catalyst},
each party transfers an asset to another party and halts.  Cross-chain
swaps are attractive because they reduce or eliminate the use of
exchanges, some of which have proved to be
untrustworthy~\cite{mtgox,quadrigacx}.
However, we have seen that cross-chain swaps lack the power to express the simple
brokerage deal described in our example,
as well as auctions and other conventional financial transactions.

To our knowledge,
the only cross-chain swap protocols used in practice are
\emph{hashed timelocked contracts}~\cite{bitcoinwiki,bip199,decred,tiersnolan,barterdex}.
Herlihy~\cite{Herlihy2018} generalizes prior two-party cross-chain swap protocols to
a protocol for multi-party swaps on arbitrary strongly-connected directed graph.
Herlihy also observes that the classical ``all-or-nothing'' correctness property
is ill-suited to cross-chain swaps,
and proposes an alternative correctness property which is more specialized than the one presented here
because it is formulated explicitly in terms of direct swaps,
not the more general structures permitted by cross-chain deals.
For example,
Herlihy assumed that any swap outcome where a party receives only partial
inputs and partial outputs is unacceptable,
but the notions of correctness introduced here allow parties to specify
whether some such partial deal outcomes are acceptable.

The timelock commit protocol presented here has a simpler structure
than the one proposed by Herlihy. That protocol used secrets held by
a carefully-chosen subset of parties. Our protocol replaces
secrets with votes performed by
everyone, so it is possible to treat all parties uniformly, and there
is no need for a careful contract deployment phase.  Our protocol also
clarifies when parties review the transactions' final outcomes.  Both
commit protocols use timeout mechanisms based on path signatures.

Zakhary \emph{et al.}~\cite{ZakharyAE2019} propose a cross-chain
commitment protocol that does not use hashed timelocks.
Instead, participating blockchains exchange ``proofs'' of state changes,
somewhat similar to our CBC proposal,
but because parties need to register their intended transfers at the
start,
this protocol supports only unconditional swaps, not full-fledged deals.

Off-chain payment networks~\cite{DeckerW2015,bolt,arwen2019,raiden,
lightning} and state channels~\cite{counterfactual} use hashed
timelock contracts to circumvent the scalability limits of existing
blockchains.
They conduct repeated off-chain transactions,
finalizing their net transactions in a single on-chain transaction. 
The use of hashed timelock contracts ensures that parties cannot be
cheated if one party tries to settle an incorrect final state.
Lind \emph{et al.}~\cite{teechain} propose using (hardware) trusted
execution environments to ease synchrony requirements.
It remains to be seen whether off-chain networks
can be applied to cross-chain deals.
Arwen~\cite{arwen2019} supports multiple off-chain atomic swaps
between parties and exchanges,
but their protocol is specialized to currency trading and does not
seem to support non-fungible assets.  
Komodo~\cite{barterdex} supports off-chain cross-platform payments.

Sharded blockchains~\cite{chainspace,omniledger}  address
 scalability limits of blockchains by partitioning the state into
 multiple shards so that transactions on different shards can proceed
 in parallel, and support multi-step atomic transactions spanning
 multiple shards. 
 An atomic transaction that spans multiple shards is executed at the
 client in Chainspace~\cite{chainspace} ,
 or  at the server in Omniledger~\cite{omniledger}.
In these systems
a transaction  represents a single trusted party and there is no support for transactions involving untrusted parties.

Chainspace~\cite{chainspace}  allows transactions to specify immutable
 proof contracts to be executed at the server.
 The proofs are used to validate client execution traces resembling optimistic concurrency control.
Channels~\cite{AndroulakiCDK2018},
an extension of Omniledger Atomix protocols,
uses proofs in a two-phase protocol similar to our CBC,
for atomic untrusted cross-shard single-step multi-party UTXO~\cite{utxo} transfers,
but does not support multi-step deals or non-fungible assets.

The BAR (byzantine, altruistic and rational) computation
model~\cite{AiyerACDMP2005, BAR-Primer:2006} supports cooperative services
spanning autonomous administrative domains that are resilient to
Byzantine and rational manipulations.
Like Byzantine fault-tolerant systems,
BAR-tolerant systems assume a bounded number of Byzantine faults,
and as such do not fit the adversarial deal model,
where any number of parties may be Byzantine.

The CBC somewhat resembles an \emph{oracle}~\cite{Augur},
a trusted data feed that reports physical-world occurrences to contracts. 

An early precursor of adversarial commerce was the study of
\emph{federated databases}~\cite{Sheth:1990},
which addressed the problem of coordinating and committing
transactions that span multiple autonomous,
mutually untrusting, heterogeneous data stores.
(Federated databases did not attempt to tolerate arbitrary Byzantine behavior.)

\section{Discussion}
\seclabel{discussion}
Deals can provide incentives for good behavior.
For example,
a party might escrow a small deposit that is lost if that party is the first to
cause the deal to fail.
Designing and implementing such incentives is an area of ongoing research~\cite{mechanism}.

In the timelock commit protocol, if $\Delta$ is too small,
parties may be vulnerable to an extended denial-of-service attack,
which can cause them to lose their incoming assets.
There is a similar threat to the CBC commit protocol,
where the CBC itself might be the target of a denial of service attack,
but the effect is different: instead the deal's assets are locked up
for the duration of the attack, not lost forever.

A more subtle issue concerning the CBC commit protocol
is that the parties must trust the CBC not to engage in
\emph{censorship},
where CBC validators selectively choose to ignore certain deals,
causing them to abort when they could otherwise have committed.

The main difference in performance in the timelock and CBC protocols is
the number of signatures that must be verified to complete the protocol,
so we need to consider $n$ parties vs. $f+1$ validators.
Many deals will likely have only a few participants, and in this case
it will be more expensive to commit a CBC deal ($O(m(f+1))$)
than a timelock deal ($O(mn^2)$); in a deal with many participants
the reverse may be true.  Even if the CBC protocol is more
expensive,  one gets what one pays for:
the CBC protocol works in a more demanding timing model.

\section{Conclusions}
Today's distributed data management systems face a new and daunting challenge:
enabling commerce among autonomous parties who do not know or trust
one another, a model we have called \emph{adversarial commerce}.
Prior approaches,
such as distributed atomic transactions or cross-chain swaps,
are inappropriate for the trust model or have limited expressive power.

In classical distributed systems,
participating parties are trusted.
Techniques such as replica groups can mask crashes and Byzantine failures,
ensuring that trust is warranted.
Computations are often organized as \emph{atomic transactions} that ensure,
for example, that transactions appear to run in one-at-a-time order,
taking effect either everywhere or nowhere.

By contrast, in adversarial commerce
each party decides for itself whether
to participate in a particular interaction.
Parties agree to follow a common protocol,
an agreement that can be monitored, but not enforced.
Correctness is local and selfish:
all parties that follow the protocol end up ``no worse off'' than when they started,
even in the presence of faulty or malicious behavior by an arbitrary number of other parties.
Examples of adversarial commerce include securities trading,
digital asset management,
the Internet of Things,
supply chain management,
and, of course, cryptocurrencies.

It is easy to confuse cross-chain deals with atomic transactions,
and with cross-chain swaps,
since they address similar needs.
We hope this paper has clarified the critical
distinctions between them, and explained why cross-chain deals are needed to
address the demands of adversarial commerce.

\balance
\newpage
\bibliographystyle{abbrv}
\bibliography{blockchain}

\begin{thebibliography}{10}

\bibitem{AbrahamGM2018}
I.~Abraham, G.~Gueta, and D.~Malkhi.
\newblock Hot-stuff the linear, optimal-resilience, one-message {BFT} devil.
\newblock {\em CoRR}, abs/1803.05069, 2018.
\newblock To appear in PODC 2019.

\bibitem{AiyerACDMP2005}
A.~S. Aiyer, L.~Alvisi, A.~Clement, M.~Dahlin, J.-P. Martin, and C.~Porth.
\newblock Bar fault tolerance for cooperative services.
\newblock In {\em Proceedings of the Twentieth ACM Symposium on Operating
  Systems Principles}, SOSP '05, pages 45--58, New York, NY, USA, 2005. ACM.

\bibitem{chainspace}
M.~Al{-}Bassam, A.~Sonnino, S.~Bano, D.~Hrycyszyn, and G.~Danezis.
\newblock Chainspace: {A} sharded smart contracts platform.
\newblock {\em CoRR}, abs/1708.03778, 2017.

\bibitem{hyperledgerfabric}
E.~Androulaki, A.~Barger, V.~Bortnikov, C.~Cachin, K.~Christidis, A.~De~Caro,
  D.~Enyeart, C.~Ferris, G.~Laventman, Y.~Manevich, S.~Muralidharan, C.~Murthy,
  B.~Nguyen, M.~Sethi, G.~Singh, K.~Smith, A.~Sorniotti, C.~Stathakopoulou,
  M.~Vukoli\'{c}, S.~W. Cocco, and J.~Yellick.
\newblock Hyperledger fabric: A distributed operating system for permissioned
  blockchains.
\newblock In {\em Proceedings of the Thirteenth EuroSys Conference}, EuroSys
  '18, pages 30:1--30:15, New York, NY, USA, 2018. ACM.

\bibitem{AndroulakiCDK2018}
E.~Androulaki, C.~Cachin, A.~D. Caro, and E.~Kokoris-Kogias.
\newblock Channels: Horizontal scaling and confidentiality on permissioned
  blockchains.
\newblock In {\em ESORICS}, 2018.

\bibitem{libra}
L.~Association.
\newblock An introduction to libra.
\newblock
  \url{https://libra.org/en-US/wp-content/uploads/sites/23/2019/06/LibraWhitePaper_en_US.pdf},
  2019.
\newblock As of 24 September 2019.

\bibitem{binance}
B.~Barrett.
\newblock Hack brief: Hackers stole \$40 million from binance cryptocurrency
  exchange.
\newblock
  \url{https://www.wired.com/story/hack-binance-cryptocurrency-exchange/}, May
  2019.
\newblock As of 8 July 2019.

\bibitem{BernsteinHZ1986}
P.~A. Bernstein, V.~Hadzilacos, and N.~Goodman.
\newblock {\em Concurrency Control and Recovery in Database Systems}.
\newblock Addison-Wesley Longman Publishing Co., Inc., Boston, MA, USA, 1986.

\bibitem{bitcoinwiki}
bitcoinwiki.
\newblock Atomic cross-chain trading.
\newblock \url{https://en.bitcoin.it/wiki/Atomic_cross-chain_trading}.
\newblock As of 9 January 2018.

\bibitem{bip199}
S.~Bowe and D.~Hopwood.
\newblock Hashed time-locked contract transactions.
\newblock \url{https://github.com/bitcoin/bips/blob/master/bip-0199.mediawiki}.
\newblock As of 9 January 2018.

\bibitem{PBFT}
M.~Castro and B.~Liskov.
\newblock Practical byzantine fault tolerance.
\newblock In {\em Proceedings of the Third Symposium on Operating Systems
  Design and Implementation}, OSDI '99, pages 173--186, Berkeley, CA, USA,
  1999. USENIX Association.

\bibitem{watchtower}
J.~Chester.
\newblock Your guide on bitcoin's lightning network: The opportunities and the
  issues.
\newblock
  \url{https://www.forbes.com/sites/jonathanchester/2018/06/18/your-guide-on-the-lightning-network-the-opportunities-and-the-issues/#6c8d8c0f3677}N,
  June 2018.
\newblock As of 11 December 2018.

\bibitem{BAR-Primer:2006}
A.~Clement, H.~Li, J.~Napper, J.~P.~M. Martin, L.~Alvisi, and M.~Dahlin.
\newblock Bar primer.
\newblock In {\em Proceedings of the International Conference on Dependable
  Systems and Networks (DSN), DCC Symposium}, 2008.

\bibitem{counterfactual}
J.~Coleman, L.~Horne, and L.~Xuanji.
\newblock Counterfactual: Generalized state channels.
\newblock \url{http://l4.ventures/papers/statechannels.pdf}, 2018.

\bibitem{DeckerW2015}
C.~Decker and R.~Wattenhofer.
\newblock A fast and scalable payment network with bitcoin duplex micropayment
  channels.
\newblock In A.~Pelc and A.~A. Schwarzmann, editors, {\em Stabilization,
  Safety, and Security of Distributed Systems}, pages 3--18, Cham, 2015.
  Springer International Publishing.

\bibitem{decred}
DeCred.
\newblock Decred cross-chain atomic swapping.
\newblock \url{https://github.com/decred/atomicswap}.
\newblock As of 8 January 2018.

\bibitem{Dwork:1988}
C.~Dwork, N.~Lynch, and L.~Stockmeyer.
\newblock Consensus in the presence of partial synchrony.
\newblock {\em J. ACM}, 35(2):288--323, Apr. 1988.

\bibitem{ethereum}
Ethereum.
\newblock \url{https://github.com/ethereum/}.

\bibitem{mechanism}
A.~Evans.
\newblock A crash course in mechanism design for cryptoeconomic applications.
\newblock
  \url{https://medium.com/blockchannel/a-crash-course-in-mechanism-design-for-cryptoeconomic-applications-a9f06ab6a976},
  2017.
\newblock As of 11 April 2019.

\bibitem{FischerLP1985}
M.~J. Fischer, N.~A. Lynch, and M.~S. Paterson.
\newblock Impossibility of distributed consensus with one faulty process.
\newblock {\em J. ACM}, 32(2):374--382, Apr. 1985.

\bibitem{FordB2018}
B.~Ford and R.~B\:ohme.
\newblock Rationality is self-defeating in permissionless systems.
\newblock
  \url{https://bford.info/2019/09/23/rational/?fbclid=IwAR3Mh0RsKZm0juma2uXNJXNlwu3Ll7Gls28PnqKVvIDt0o-itNAacNod0Nc},
  Sept. 2019.
\newblock As of 26 September 2019.

\bibitem{erc20}
E.~Foundation.
\newblock Erc20 token standard.
\newblock \url{https://theethereum.wiki/w/index.php/ERC20_Token_Standard},
  2019.
\newblock As of 6 April 2019.

\bibitem{GiladHMVZ2017}
Y.~Gilad, R.~Hemo, S.~Micali, G.~Vlachos, and N.~Zeldovich.
\newblock Algorand: Scaling byzantine agreements for cryptocurrencies.
\newblock In {\em Proceedings of the 26th Symposium on Operating Systems
  Principles}, SOSP '17, pages 51--68, New York, NY, USA, 2017. ACM.

\bibitem{bolt}
M.~Green and I.~Miers.
\newblock Bolt: Anonymous payment channels for decentralized currencies.
\newblock Cryptology ePrint Archive, Report 2016/701, 2016.
\newblock \url{https://eprint.iacr.org/2016/701}.

\bibitem{HaerderR1983}
T.~Haerder and A.~Reuter.
\newblock Principles of transaction-oriented database recovery.
\newblock {\em ACM Comput. Surv.}, 15(4):287--317, Dec. 1983.

\bibitem{arwen2019}
E.~Heilman, S.~Lipmann, and S.~Goldberg.
\newblock The arwen trading protocols.
\newblock \url{https://www.arwen.io/whitepaper.pdf}, Jan. 2019.
\newblock As of 23 February 2019.

\bibitem{Herlihy2018}
M.~Herlihy.
\newblock Atomic cross-chain swaps.
\newblock In {\em Proceedings of the 2018 ACM Symposium on Principles of
  Distributed Computing}, PODC '18, pages 245--254, New York, NY, USA, 2018.
  ACM.

\bibitem{utxo}
Investopedia.
\newblock Utxo.
\newblock \url{https://www.investopedia.com/terms/u/utxo.asp}, 2019.
\newblock As of 7 April 2019.

\bibitem{KiayiasLS2016}
A.~Kiayias, N.~Lamprou, and A.-P. Stouka.
\newblock Proofs of proofs of work with sublinear complexity.
\newblock In {\em International Conference on Financial Cryptography and Data
  Security}, 2016.

\bibitem{cryptoeprint:2017:963}
A.~Kiayias, A.~Miller, and D.~Zindros.
\newblock Non-interactive proofs of proof-of-work.
\newblock Cryptology ePrint Archive, Report 2017/963, 2017.
\newblock \url{https://eprint.iacr.org/2017/963}.

\bibitem{omniledger}
E.~Kokoris~Kogias, P.~S. Jovanovic, L.~Gasser, N.~Gailly, E.~Syta, and B.~A.
  Ford.
\newblock Omniledger: A secure, scale-out, decentralized ledger via sharding.
\newblock In {\em 2018 IEEE Symposium on Security and Privacy (SP)}, page~16,
  2018.

\bibitem{teechain}
J.~Lind, I.~Eyal, F.~Kelbert, O.~Naor, P.~R. Pietzuch, and E.~G. Sirer.
\newblock Teechain: Scalable blockchain payments using trusted execution
  environments.
\newblock {\em CoRR}, abs/1707.05454, 2017.

\bibitem{bitcoin}
S.~Nakamoto.
\newblock Bitcoin: A peer-to-peer electronic cash system.
\newblock May 2009.

\bibitem{raiden}
R.~Network.
\newblock What is the raiden network?
\newblock \url{https://raiden.network/101.html}.
\newblock As of 26 January 2018.

\bibitem{tiersnolan}
T.~Nolan.
\newblock Atomic swaps using cut and choose.
\newblock \url{https://bitcointalk.org/index.php?topic=1364951}.
\newblock As of 9 January 2018.

\bibitem{barterdex}
T.~K. Organization.
\newblock The barterdex whitepaper: A decentralized, open-source cryptocurrency
  exchange, powered by atomic-swap technology.
\newblock
  \url{https://supernet.org/en/technology/whitepapers/BarterDEX-Whitepaper-v0.4.pdf}.
\newblock As of 9 January 2018.

\bibitem{Augur}
J.~Peterson, J.~Krug, M.~Zoltu, A.~K. Williams, and S.~Alexander.
\newblock Augur: a decentralized oracle and prediction market platform.
\newblock \url{https://www.augur.net/whitepaper.pdf}, 2018.
\newblock As of 7 April 2019.

\bibitem{lightning}
J.~Poon and T.~Dryja.
\newblock The bitcoin lightning network: Scalable off-chain instant payments.
\newblock \url{https://lightning.network/lightning-network-paper.pdf}, Jan.
  2016.
\newblock As of 29 December 2017.

\bibitem{Sheth:1990}
A.~P. Sheth and J.~A. Larson.
\newblock Federated database systems for managing distributed, heterogeneous,
  and autonomous databases.
\newblock {\em ACM Comput. Surv.}, 22(3), Sept. 1990.

\bibitem{tendermint-github}
Tendermint.
\newblock \url{http:/https://github.com/tendermint/tendermint/wiki}, Oct 2015.
\newblock Commit \texttt{c318a227}.

\bibitem{twophasecommit}
Wikipedia.
\newblock Two-phase commit protocol.
\newblock \url{https://en.wikipedia.org/wiki/Two-phase_commit_protocol}.
\newblock As of 18 May 2018.

\bibitem{mtgox}
Wikipedia.
\newblock Mt. gox.
\newblock \url{https://en.wikipedia.org/wiki/Mt._Gox}, 2019.
\newblock As of 6 April 2019.

\bibitem{quadrigacx}
Wikipedia.
\newblock Quadriga fintech solutions.
\newblock \url{https://en.wikipedia.org/wiki/Quadriga_Fintech_Solutions}, 2019.
\newblock As of 6 April 2019.

\bibitem{yellowpaper}
G.~Wood.
\newblock Ethereum: A secure decentralised generalised transaction ledger.
\newblock {\em Ethereum project yellow paper}, 151:1--32, 2014.

\bibitem{ZakharyAE2019}
V.~Zakhary, D.~Agrawal, and A.~{El Abbadi}.
\newblock Atomic commitment across blockchains.
\newblock {\em CoRR}, abs/1905.02847, 2019.

\bibitem{Catalyst}
G.~Zyskind, C.~Kisagun, and C.~FromKnecht.
\newblock Enigma catalyst: a machine-based investing platform and
  infrastructure for crypto-assets.
\newblock \url{https://www.enigma.co/enigma_catalyst.pdf}.
\newblock As of 25 January 2018.

\end{thebibliography}

\end{document}